\newcommand{\dbtilde}[1]{\accentset{\approx}{#1}}
\newcommand{\iif}[1] {\ \mathbf{1} {\big({#1}\big)}  }
\newcommand{\eexp}[1] {\mathbb{E}\left\{{#1}\right\}}
\DeclareMathOperator*{\argmax}{arg\,max}
\DeclareMathOperator*{\argmin}{arg\,min}
\def \oot {{1 \over 2}}
\def \C {\mathcal{C}}
\def \Ut {U_0}
\def \Am {A_{\text{m}}}
\def \Pam {P_{ \text{APm}  }}
\def \No {N_o}
\def \Ni {N_i}
\def \Cm {C_{\text{m}}}
\def \Pm {P_{\text{m}}}
\def \Out {\mathcal{O}}
\def \In {\mathcal{I}}
\newtheorem{lemma}{Lemma}
\newtheorem{theorem}{Theorem}
\newtheorem{remark}{Remark}
\title{Joint Data Routing and Power Scheduling for Wireless Powered Communication Networks }
\author{Mohammad Movahednasab, Naeimeh Omidvar, Mohammad Reza Pakravan, Tommy Svensson}
\begin{document}
\maketitle
\begin{abstract}
 In a wireless powered communication network (WPCN), an energy access point supplies the energy needs of the network nodes through radio frequency wave transmission, and  the nodes store the received energy in their batteries for their future data transmission. In this paper, we propose an online stochastic policy that jointly controls energy transmission from the EAP to the nodes and data transfer among the nodes. For this purpose, we first introduce a novel perturbed Lyapunov function to address the limitations on the energy consumption of the nodes imposed by their batteries.  Then, using Lyapunov optimization method, we propose a policy which is adaptive to any arbitrary channel statistics in the network. Finally, we provide theoretical analysis for the performance of the proposed policy and show that it stabilizes the network, and the average power consumption of the network under this policy is within a bounded gap of the minimum power level required for stabilizing the network.

\end{abstract}

\section{Introduction}
Nowadays, smart electronic devices are increasingly making their way into our daily life. It is predicted that by 2021, there will be around 28 billion connected devices  all over the world \cite{Ab2016}, a great number of which will be portable and battery-powered. However, in some applications, replacing the batteries or recharging them by cables is impossible, e.g. in biomedical implants inside human bodies \cite{Zeng2016} or distributed monitoring  sensors in a wide area of forest. Consequently, to ensure a better user experience for the next-generation networks, the problem of providing the required power for the portable battery-operated devices has recently gained lots of attention, both from academia and industry \cite{Huang2015,Zeng2016}. Recently, the idea of charging batteries over the air is considered as a solution which guarantees an uninterrupted connection and operates autonomously, while reduces the massive battery disposal.   Wireless Power Transfer (WPT) is the key enabling technology for charging over the air. There are various WPT methods including \textit{Radio Frequency} (RF) power transfer\cite{Huang2015}, \textit{resonant coupling}\cite{Resonant2009} and \textit{inductive coupling}\cite{Inductive2013}. Compared to the two latter methods, RF power transfer provides a wider coverage range and is more flexible for transmitter/receiver deployment and movement  \cite{Zeng2016}.  Therefore, it is considered as the most promising WPT approach by the literature.

Adapting  WPT technology in wireless communication networks introduces new research challenges, mostly related to  increasing coverage and efficiency. A prominent challenge is how to maintain power transfer efficiency despite the transmission path loss \cite{Yang2014}. There has been numerous studies on energy beamforming as a technique for alleviating the high transmission path loss (e.g., see \cite{Yang2014,Liu2014,Larsson2013,Nasir2016}). In \cite{Yang2014} and  \cite{Liu2014}, a wireless powered communication network (WPCN) consisting of a hybrid data/energy access point with multiple antennas and several single-antenna users is considered, in which the  access point transmits energy toward the nodes in the down-link direction and the nodes transmit data to the access point in the up-link direction. In these works, the minimum achievable rate among users is maximized by optimizing the beamforming vector and some other controllable parameters. 
Energy beamforming for the so-called simultaneous wireless information and power transfer (SWIPT) method is studied in \cite{Larsson2013,Nasir2016}. Under SWIPT, both energy and data are  jointly transmitted by an RF carrier in the down-link, the receiver extracts data or harvests power through splitting the received signal in the time or power domain.

Cooperative wireless powered communication is another line of research that aims at increasing the network coverage (e.g. see \cite{Ju2014,Bafghi2017,Xu2017,Gurakan2016}). The intuition behind it is that in WPCNs, the  users nearer  to the access point harvest more energy, while need to consume less energy for  their data transmission. Hence, using cooperation, these nodes can use some of their surplus energy to help relaying the data of the further nodes. In \cite{Ju2014}, a two user scenario is considered in which the nearer user allocates a portion of its harvested energy to help relaying the farther user's data to the access point. The authors have  maximized the sum rate of the users by optimizing the resource allocation. In \cite{Bafghi2017}, the authors have derived the achievable rate of a two-hop relay network, in which the relay stores its harvested energy in the battery for its future transmissions. While most works in the  related literature have considered two node cooperation there are very few works on multi-hop cooperation, e.g.  \cite{Xu2017} and \cite{Gurakan2016}. In these works  a general multi-hop network with energy transfer capability has been considered, and the  routing policy and energy allocation are determined so as to maximize the sum rate and the lifetime  of the network. 

It should be noted that most of the existing works in the literature  have focused on optimizing the network parameters for a single time-slot. Clearly, this approach is not optimal when the users can store the harvested energy in their batteries for their future use. There are very few works on the long-term network optimization \cite{Biason2018,Choi2015, RezaeiPIMRC,RezaeiGlobe,RezaeiArxiv}. The authors in \cite{Biason2018} have studied the long-term network utility optimization through markov decision programming (MDP) theory. The MDP method requires  statistical knowledge of the channel variation, and the complexity of its solutions grows fast as the network dimension increases \cite{Bertsekas2005}. However, the Lyapunov optimization technique applied in  \cite{Choi2015,RezaeiPIMRC,RezaeiGlobe,RezaeiArxiv}, is independent of the channel statistics and the network dimension. 

In this work, we consider the problem of designing an optimal WPT policy that schedules power allocation, data routing and energy beamforming in a multi-hop WPCN.  We consider a battery level  constraint for each node, which indicates that at each time-slot, the energy that can be consumed can not be greater than the  energy stored in the battery. Moreover, the average data backlog in the network should remain finite. The battery constraint complicates the problem, since high energy consumption at a time may highly lower down the battery level and cause energy outage in future.  Therefore, the decision at one  time-slot affects the optimal decision in future, as well. This coupling makes finding the optimal policy highly challenging. A similar problem has been considered in utility maximization for  energy harvesting wireless sensor networks in \cite{Huang2013,Gatzianas2010}. The authors have addressed the battery constraint by a modified Lyapunov optimization method. However, their method is not applicable in our energy optimization problem, as their objective function (and hence, their accordingly  analysis) is totally different. In this paper, we use  Lyapunov optimization method with a novel Lyapunov function to avoid energy outage. We propose an online policy that is independent of the channel statistics. Under this policy, at each time-slot, the energy beam is focused toward the nodes with lower battery levels, greater queue backlogs and better energy link condition. Moreover, the data is routed through the  nodes with less congested queues and greater battery level. We then analyze the performance of the proposed policy and provide theoretical results that show the performance of our policy is within $\mathcal{O}({1\over V})$ of the optimal policy, for any $V>0$, while the average backlogs of data queues are upper bounded by $\mathcal{O}(V)$. We would like to note that the most related work to our paper is  \cite{Choi2015}, which studies energy optimization  in a single-hop WPCN. However, the authors in  \cite{Choi2015} have pursued a different approach to address the energy outage problem, which imposes a minimum requirement on maximum transmission power of the access point. This minimum requirement can be too high in practice and may not be satisfied in certain cases, due to safety or implementation issues. 

The contributions of this paper can be summarized as follows: 
\begin{itemize}
\item We propose a  power scheduling, energy beamforing and data routing policy for a general multi-hop WPCN. 
\item We show that our policy conforms to the battery level constraint.
\item  Using Lyapunov optimization method, we bound the optimality gap of the EAP average power consumption and the average backlog of the queues. 
 \end{itemize}
The rest of the paper is organized as follows.  Section \ref{sec:sysMo} illustrates  our system model and problem formulation.  Section \ref{sec:policy}  presents our proposed policy.    The performance of the policy is analyzed in Section \ref{sec:performance}. Simulation results are presented in Section \ref{sec:sim}, and finally, Section \ref{sec:conclude} concludes the paper.

Notation: We use boldface letters to denote matrices and vectors. $(.)^T$  denotes the transpose of a matrix. $|.|$ denotes the absolute value. If not mentioned, vectors are single row-matrices. $\mathbb{E}\{.\}$ represents the expectation. $[x]^{+}$ denotes $\max\{x,0\}$. $\iif{Condition}$ equals $1$ if the $Condition$ is satisfied and equals $0$, othewise.

\section{System Model}\label{sec:sysMo}

We consider a WPCN consisting of one energy access point (EAP) and $N$ wireless nodes, where there exist $S$ streams of data between distinct endpoints in the network.  The nodes are battery-powered, and the batteries are recharged by the energy received from the EAP.  
 There exist $N$ \textit{energy links} between the EAP and the nodes, and $L$  \textit{data links} between the nodes. The topology of a sample network  is depicted in Fig. \ref{fig:sampleTopology}. For each data link $l\in\{1, \ldots, L\}$, $T(l)$ and $R(l)$ denote the transmitter and receiver of  the $l$-th link, respectively. Moreover, we define $\mathcal{I}_n$ and $\mathcal{O}_n$ to be  the sets of the ingoing and outgoing data links of node $n$, respectively.  The time horizon is divided into time-slots with fixed length\footnote{Without loss of generality, we assume the slot duration is normalized to 1. Therefore, we sometimes use the terms ``power'' and ``energy'' interchangeably.}, indexed by $t$. At the beginning of each time-slot, a small portion of it is devoted to channel estimation and control signaling. The rest of the time-slot is divided equally for energy and data transmission, respectively. The EAP is equipped with $M$ antennas to focus its transmission beam toward the nodes. Moreover, we assume that the nodes use a single antenna for both energy reception and data transmission/reception. 

The channels state information are assumed to be constant during a time-slot but vary randomly and independently in successive time-slots.  At each time-slot $t$, $g_l(t)$ and $h_n^m(t)$ represent the channel gain of the $l$-th data link and the gain of the channel between $m$-th antenna of the EAP and the node $n$, respectively. Accordingly, we define  $\bm{g}(t) \triangleq (g_1(t),\ldots, g_L(t))$ and $\bm{h}_n(t) \triangleq (h_n^1(t), \ldots, h_n^M(t))$ as the channel vectors for data links and energy link of node $n$, respectively.

\begin{figure}
\centering
\includegraphics[width = 0.45 \textwidth]{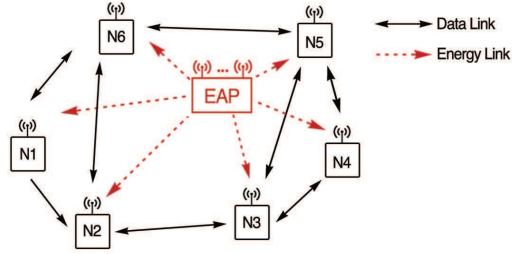}
\caption{A sample  WPCN topology.}
\label{fig:sampleTopology}
\end{figure}

\subsection{Data and Energy Transmission}
Let $\bm{p}(t) \triangleq (p_1(t), \ldots,p_L(t))$ denote the data link \textit{power vector}, in which the $l$-th entry  specifies the transmission power over the $l$-th data link. 
Moreover, let $\Pi$ denote the set of all feasible power vectors. We assume that setting any element of a power vector in $\Pi$ to zero results in a new power vector that also belongs to  $\Pi$. Furthermore, we assume that the peak transmission power is limited to $\Pm$  (i.e., $p_l(t) \in [0,\Pm]$). Let $C_l(\bm{p}(t), \bm{g}(t))$ denote the data transmission capacity of link $l$ under power vector $\bm{p}(t)$ and channel vector $\bm{g}(t)$. Some important properties of   $C_l(\bm{p}(t), \bm{g}(t))$  is presented in following remark.


\begin{remark} \label{rem:capProp}
Consider two power vectors $\bm{p}(t)$ and $\bm{p}^{\prime}(t)$, where $p_{l^\prime}^{\prime}(t) = p_{l^\prime}(t),\forall l^\prime\neq l$ and $p_{l}^{\prime}(t) = 0$. The capacity of link $l$ under each of these two power vectors satisfies the following properties:
 \begin{align}
 C_l(\bm{p}^\prime(t), \bm{g}(t)) &= 0,\label{eq:noPowerNoRate}\\
 C_l(\bm{p}(t), \bm{g}(t))& \leq \delta p_l(t),\label{eq:rateBound}\\
C_{l^\prime}(\bm{p}(t), \bm{g}(t)) &\leq C_{l^\prime}(\bm{p}^\prime(t), \bm{g}(t)) \;\; \forall {l^\prime} \neq l,\label{eq:interference}
\end{align}
where \eqref{eq:rateBound} holds for some $\delta >0$.
\end{remark}
Note that the above properties are satisfied under conventional rate-power functions. Equation \eqref{eq:noPowerNoRate} implies that for any link $l$, no data can be passed through it if no power is assigned to this link.  Inequality \eqref{eq:rateBound} states that the rate-power function is upper bounded by a linear function, which is the case for differentiable functions with limited first derivative. Finally, inequality \eqref{eq:interference} holds due to the interference effect among wireless links.  Furthermore, we assume that there exists a constant $\Cm > 0$, such that $C(\bm{p}(t), \bm{g}(t))\leq \Cm, \;\forall \bm{p}(t) \in \Pi,\forall\bm{g}$. For any link $l$, let $C_l^s(t)$ denote the transmission rate allocated to stream $s, \; \forall s$, over that link. Clearly, the sum allocated rate over each link $l$ should not exceed the capacity of that link. Therefore, a feasible rate allocation scheme should satisfy
\begin{align}
\sum_{s=1}^SC_l^s(t) \leq C_l(\bm{p}(t), \bm{g}(t)).
\label{eq:capLimit}
\end{align}

Fig. \ref{fig:EAPStructure} shows the considered structure of an EAP. The EAP performs energy beamforming  to concentrate its transmit energy towards the nodes.  Vector 
$\bm{w}(t) \triangleq (w_1(t), \ldots,w_M(t))$ denotes the normalized beamforming vector of the EAP, and accordingly, the received power at each node $n$ is  given by

\begin{align}
Q_n(t) = p_{AP}(t) |\bm{w}(t)\bm{h}_n^T(t)|^2\;\; \forall n,
\label{eq:recPower}
\end{align}
where $p_{AP}(t)$ is the EAP's transmit power at time $t$, with its peak power equal to $\Pam$, i.e., $p_{AP}(t) \in [0,\Pam]$.

\begin{figure}

\centering
\includegraphics[width = 0.3 \textwidth]{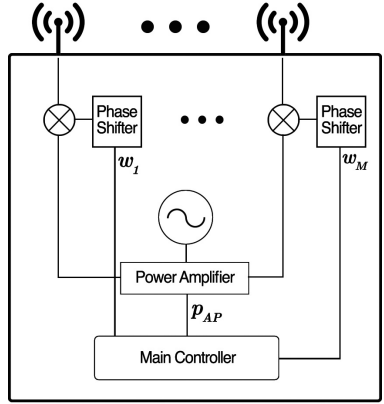}
\caption{  Internal structure of  an EAP.}
\label{fig:EAPStructure}
\end{figure}

%
%
%

\subsection{Wireless Nodes }
As shown in Fig. \ref{fig:WNStructure}, each node includes  $S$ data queues and  is equipped with a battery.  Let $U_n^s(t)$ denote the backlog of the data queue allocated to stream $s \in \{1,\ldots,S\}$ in node $n$. The backlog evolves as follows
\begin{align}
\begin{split}
U_n^s(t+1) = \bigg [ U_n^s(t) - \sum_{l \in \mathcal{O}_n} C^s_l(t)\bigg]^{+}+\sum_{l \in \mathcal{I}_n} C^s_l(t) +A^s_n(t),
\end{split}
\label{eq:queueEvolve}
\end{align}
where $A_n^s(t) \in [0,\Am]$ denotes  the random arrival process of data stream $s$ at node $n$. Note that $A_n^s(t)$ is nonzero only if node $n$ is the source of stream $s$.  Let $\lambda^s_n \triangleq \eexp{A_n^s(t)}$, then clearly we have
\begin{align*}
\lambda_n^s = 
\begin{cases}
\lambda_s, \;\;&\text{if node $n$ is source of stream $s$, }\\
0, \;\;&\text{otherwise,}
\end{cases}
\end{align*} 
where $\lambda_s$ is the arrival rate for stream $s$. 

The battery of each node  is recharged by the energy received from the EAP and is discharged when the node transmits data. Let $E_n(t)$ denote the battery level of node $n$ at beginning of time-slot $t$. Therefore, the battery level at node $n$ evolves according to following equation:
\begin{align}
E_n(t+1) = E_n(t) - \sum_{l \in \Out_n} p_l(t) + Q_n(t),
\label{eq:batteryEvolve}
\end{align}
where $ \sum_{l \in \Out_n} p_l(t)$ is the total transmit power of node $n$ at time-slot $t$. 
\begin{figure}

\centering
\includegraphics[width = 0.3 \textwidth]{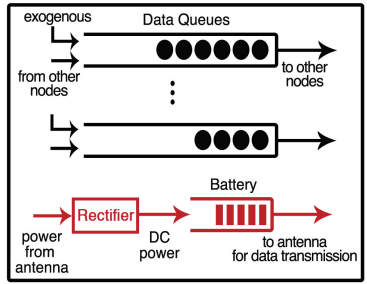}
\caption{ The data flows and energy flow inside each node of the network.}
\label{fig:WNStructure}
\end{figure}

\subsection{Network Controller} 

 There exists a  network controller, located at the EAP that controls over both data and energy links, having access to channel state information, data queue backlogs and the battery levels of all links and nodes in the network. It controls the energy links by specifying the EAP transmission power $p_{AP}(t)$  and the beamforming vector $\bm{w}(t)$, and the data links  by determining their power vector $\bm{p}(t)$ and   routing of the data streams. 
It routes the data through allocating the capacity of the links  to the existing data streams in the network. 

%
%
%

As aforementioned, in this paper, we focus on designing  a joint data routing and energy transfer control policy   for the network controller that minimizes the total transferred power of the EAP while guaranteeing all data queues in the network to be stable.  The intended policy can be explicitly formulated as the solution to the following problem\footnote{Note that including both time average and  expectation in this problem formulation is due to the fact that we seek the optimal policy among both stationary and non-stationary policies. For a non-stationary policy, the expectations are time-dependent and hence, taking the time average is required. }. 
%
%
%
%

\begin{mini!}
{\bm{p}(t),\bm{w}(t),p_{AP}(t), C_l^s(t) } {\bar{p}_{AP} =  \lim_{T \rightarrow \infty}\frac{1}{T}\sum_{t=0}^{T-1}\mathbb{E}\left\{p_{AP}(t)\right\}\label{eq:objectiveFun}}{\label{prob:mainProbDef}}{}
\addConstraint{\sum_{l \in \mathcal{O}(n)}p_l(t) \leq E_{n}(t), \;\;\forall n,t \label{eq:batteryConstraint}}
\addConstraint{\bar{U} = \limsup_{T \rightarrow \infty}{1 \over T}\sum_{t = 0}^{T-1}\sum_{n,s}\mathbb{E}\left\{U_n^s(t)\right\} < \infty \;\; \forall n,s \label{eq:stableConstraint}}
\addConstraint{\eqref{eq:capLimit},\eqref{eq:queueEvolve}, \eqref{eq:batteryEvolve},}
\end{mini!}
where Constraint \eqref{eq:batteryConstraint} guarantees that the sum power allocated to the outgoing links of a node is not greater than what can be supported by the battery level of the node. Moreover, Constraint \eqref{eq:stableConstraint} ensures stability of all queues.  

Note that the formulated problen in  \eqref{prob:mainProbDef} is a stochastic utility optimization problem.   Although in general, these problems can be tackled by the so-called min drift plus penalty (MDPP) algorithm \cite{Neely2010},  the battery constraint in \eqref{eq:batteryConstraint} highly complicates our problem and makes it quite challenging. This is mainly due to the fact that in the battery-operated case, consuming high power in a specific time-slot may drastically lower  the battery level down and  restrict future transmissions. Therefore, having the battery level constraint, policies  with independent decisions at each time-slot are not optimal any more, which is not acceptable in  the  MDPP problem formulation. In the sequel, we propose a solution to handle the battery constraint in MDPP problem formulation.   
For  convenience, all the notations in the paper and their definitions are presented in Table \ref{table:notation}.
\begin{table}
\caption{Notation Summary}
\begin{center}
\begin{tabular}{  |c|c|}
\hline
Symbol                  				& Meaning   \\ \hline
$U_n^s(t)$		 				& The backlog of data queue allocated to stream $s$ in node $n$ at time-slot $t$.\\\hline
$ E_n(t)$               				& The battery level of node $n$ at time-slot $t$. \\\hline
$N, S, L$                       			& Number of nodes, streams and data links. \\\hline
$ T(l)$ 		               			&The transmitter index for  link $l$. \\\hline
$R(l)$            		  			&The receiver index for  link $l$. \\\hline
$ \Out_n$              				&The set of outgoing  links from  node $n$. \\\hline
$ \In_n$              					&The set of ingoing  links to  node $n$. \\\hline
$ \No$                					& The maximum number of outgoing links from from a specific node. \\\hline
$ \Ni$ 			              			& The maximum number of ingoing links to a specific node. \\\hline
$Q_n(t) $               				& The harvested energy by node $n$ at time-slot $t$.\\\hline
$ \bm{w}(t)$						& The beamforming vector. \\ \hline
$p_{AP}(t)$						& The transmission power of EAP at time-slot $t$.\\ \hline
$A_n^s(t)$               			       & The exogenous data arrival for stream $s$ in node $n$.\\\hline
$\lambda_n^s $					& The mean value of the exogenous data arrival for stream $s$ in node $n$.\\\hline
$\bm{g}(t)$						& The vector of  data link channel states at time-slot $t$. \\ \hline
$\bm{h}(t)$						& The vector of  energy link channel states at time-slot $t$. \\ \hline
$\bm{p}(t)$						& The vector of power allocation to data links at time-slot $t$. \\ \hline
$\Pi$					              & The set of valid power vectors. \\ \hline
$C_l(\bm{p}(t), \bm{g}(t))$			& Capacity of link $l$ at time-slot $t$. \\ \hline
$\Pm, \Pam $	 				& Maximum transmission power of nodes and EAP.\\ \hline
\end{tabular}
\end{center}

\label{table:notation}
\end{table}

\section{The Proposed Online Control Policy for Joint Data Routing and Power Transfer Scheduling}\label{sec:policy}
In this section, we present an online control policy for the network controller. This policy is developed based on the Lypunov optimization method \cite{Neely2010}. We propose a novel perturbed Lyapunov function to push the battery level up. The Lyapunov function is defined as
\begin{align*}
L(t)\triangleq \sum_{n=1}^N\sum_{s=1}^S L_n^s(t),
\end{align*}
where
\begin{align}
\begin{split}
&L_n^s(t)\triangleq \oot \big(U_n^s(t)\big)^2 + \oot \big(U_n^s(t)-\C E_n(t)\big)^2 \iif{U_n^s(t) > \C E_n(t)},\\ \label{eq:LyapunovFun}
&\iif{U_n^s(t) > \C E_n(t)} = \begin{cases} 1,    &U_n^s(t) > \C E_n(t),\\ 
0, &\text{otherwise}.
\end{cases}
\end{split}
\end{align}
Note that constant $\C$ in \eqref{eq:LyapunovFun},  is an energy normalization factor and we set it to $\frac{2\delta}{1- \frac{1}{\alpha}}$, where $\delta$ is the slope of the linear upper bound for the rate-power function in \eqref{eq:rateBound} and $\alpha >1$ is a constant and will be discussed later. We also define the Lypunov drift function,
\begin{align*}
\Delta(L(t)) \triangleq \eexp{L(t+1) - L(t)| \bm{U}(t), \bm{E}(t)},
\end{align*}
where $\bm{U}(t)$ and $\bm{E}(t)$ are the sets of all data queues and batteries in the network, respectively. Next we define the drift-plus-penalty function, as follows

\begin{align}
 \Delta(L(t)) +V \mathbb{E}\{   p_{AP}(t)| \bm{U}(t), \bm{E}(t)\},\label{eq:driftPlusPenalty}
\end{align}
where $V>0$  is a control parameter. The following Lemma establishes an upper bound on the above drift-plus-penalty function. 

\begin{lemma}\label{lem:upperBound}
For the defined drift-plus-penalty function, the following inequality always holds
\begin{small}
\begin{align}
\begin{split}
 &\Delta(L(t)) +V \mathbb{E}\{   p_{AP}(t)| \bm{U}(t), \bm{E}(t)\}  \leq B+\sum_{(n,s)\in \tilde{N}(t)} \eexp{ \sum_{l\in \In_n} C_l^s(t) - \sum_{l\in \Out_n} C_l^s(t) | \bm{U}(t), \bm{E}(t) } U^s_n(t) \\
&+ \sum_{(n,s)\in \dbtilde{N}(t)} \mathbb{E}\bigg\{ \C \sum_{l\in \Out_n}p_l(t) + \sum_{l\in \In_n} C_l^s(t)  -\sum_{l\in \Out_n} C_l^s(t) | \bm{U}(t), \bm{E}(t)\bigg\} \big(U^s_n(t) - \C E_n(t)\big)\\
&+ \mathbb{E}\bigg\{  Vp_{AP}(t) -      \C\sum_{(n,s)\in \dbtilde{N}(t)}Q_n(t)\big(U^s_n(t) - \C E_n(t)\big) | \bm{U}(t), \bm{E}(t) \bigg\}\\
&+ \sum_{(n,s)\in \tilde{N}(t)}\lambda_n^sU_n^s(t) + \sum_{(n,s)\in \dbtilde{N}(t)}\lambda_n^s \big(U_n^s(t) - \C E_n(t)\big),
\end{split}\label{eq:dppUpperBound}
\end{align}
\end{small}
where  $\tilde{N}(t)\triangleq \{(n,s) | U^s_n(t) > U_0\}$, $\dbtilde{N}(t)\triangleq \Big\{(n,s) | U^s_n(t) > \max\{U_0, \C E_n(t) \}\Big\}$  and $U_0 =\Pm( \C +\alpha\delta)$. The constant $B$ is defined in Appendix \ref{sec:proofUpperBound}.
\end{lemma}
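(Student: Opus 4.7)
The plan is to write $L(t) = L^{(1)}(t) + L^{(2)}(t)$ with $L^{(1)}(t) = \sum_{n,s} \oot (U_n^s(t))^2$ and $L^{(2)}(t) = \sum_{n,s} \oot([U_n^s(t) - \C E_n(t)]^+)^2$, bound the one-step drift of each piece separately using the queue dynamics \eqref{eq:queueEvolve} and the battery dynamics \eqref{eq:batteryEvolve}, add the penalty $V\,\eexp{p_{AP}(t)\mid \bm{U}(t),\bm{E}(t)}$, and finally split each sum according to whether $U_n^s(t)>\Ut$ so that the small-queue contributions can be absorbed into the constant $B$. The choice $\Ut = \Pm(\C + \alpha\delta)$ is tailored so that every linear drift term falling outside $\tilde N(t)$ or $\dbtilde N(t)$ is deterministically bounded, independent of the current state and of $V$.

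For $\Delta L^{(1)}(t)$, I would apply the textbook queue-squaring inequality $([a-b]^+ + c)^2 \leq a^2 + b^2 + c^2 + 2a(c-b)$ to \eqref{eq:queueEvolve} with $a = U_n^s(t)$, $b = \sum_{l\in\Out_n}C_l^s(t)$, and $c = \sum_{l\in\In_n}C_l^s(t) + A_n^s(t)$. Using the uniform bounds $C_l^s(t) \leq \Cm$, $A_n^s(t) \leq \Am$, $|\Out_n|\leq \No$ and $|\In_n|\leq \Ni$, the quadratic terms aggregate into a finite constant, while the cross term produces exactly the first sum of \eqref{eq:dppUpperBound}, with the $\lambda_n^s U_n^s$ contribution emerging when $A_n^s(t)$ is pulled out of the conditional expectation. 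On the complement of $\tilde N(t)$ one has $U_n^s(t)\leq \Ut$, so the corresponding linear piece is deterministically bounded and folded into $B$.

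For $\Delta L^{(2)}(t)$, the technical crux is the indicator $\iif{U_n^s(t)>\C E_n(t)}$, which may flip between $t$ and $t+1$ and blocks a naive product-rule expansion. I would sidestep this by first establishing the elementary inequality $\varphi(y) - \varphi(x) \leq [x]^+ (y-x) + \oot (y-x)^2$ for $\varphi(x) \triangleq \oot([x]^+)^2$, which a short case analysis on the signs of $x$ and $y$ verifies. Applying it with $x = U_n^s(t) - \C E_n(t)$ and $y = U_n^s(t+1) - \C E_n(t+1)$ and substituting \eqref{eq:queueEvolve}, \eqref{eq:batteryEvolve}, the linear increment $y-x$ reduces to
\begin{align*}
\sum_{l\in\In_n}C_l^s(t) + A_n^s(t) - \sum_{l\in\Out_n}C_l^s(t) + \C\sum_{l\in\Out_n}p_l(t) - \C Q_n(t),
\end{align*}
up to a non-positive slack from the $[\cdot]^+$ in the queue update, which is harmless because it multiplies the non-negative factor $[x]^+$. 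The quadratic remainder is uniformly bounded via \eqref{eq:rateBound}, $\Pm$, $\Pam$ and \eqref{eq:recPower}, so it is folded into $B$. Grouping the linear terms into those describing the data and outgoing-power dynamics and those involving $Q_n$ reproduces the second and third sums of \eqref{eq:dppUpperBound}; thresholding at $\Ut$ restricts the non-constant portion to $(n,s)\in\dbtilde N(t)$, and isolating $A_n^s(t)$ before taking the conditional expectation yields the $\lambda_n^s(U_n^s(t) - \C E_n(t))$ contribution on the last line. Adding $V\,\eexp{p_{AP}(t)\mid\bm{U}(t),\bm{E}(t)}$ then assembles the complete inequality.
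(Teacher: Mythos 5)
Your overall strategy is the same as the paper's: split $L$ into the standard quadratic part and the perturbed part, bound each one-step increment, and absorb every contribution from queues below the threshold $\Ut$ into the constant $B$. The paper organizes this as an explicit enumeration of three cases ($U_n^s(t)\le \Ut$; $U_n^s(t)>\Ut$ with $U_n^s(t)\le \C E_n(t)$; $U_n^s(t)>\Ut$ with $U_n^s(t)>\C E_n(t)$), whereas you replace the case analysis for the perturbed part by the single inequality $\varphi(y)-\varphi(x)\le [x]^+(y-x)+\oot(y-x)^2$ for $\varphi(x)=\oot([x]^+)^2$, which is correct and is in fact a cleaner way to handle the indicator flipping between $t$ and $t+1$ (a point the paper passes over with ``after some algebraic manipulation''). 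The identification $\oot(U-\C E)^2\iif{U>\C E}=\oot\big([U-\C E]^+\big)^2$, the treatment of the first part via the queue-squaring inequality, and the final thresholding step are all fine.

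There is, however, one step that fails as written. From \eqref{eq:queueEvolve} one has
$U_n^s(t+1)=U_n^s(t)-\sum_{l\in\Out_n}C_l^s(t)+\sum_{l\in\In_n}C_l^s(t)+A_n^s(t)+\big[\sum_{l\in\Out_n}C_l^s(t)-U_n^s(t)\big]^+$,
so the slack created by the $[\cdot]^+$ in the queue update is \emph{non-negative}, not non-positive. Since it multiplies the non-negative factor $[x]^+$, replacing $y-x$ by your linear formula is an inequality in the wrong direction and does not by itself yield an upper bound on $[x]^+(y-x)$. The step is repairable: the slack is nonzero only when $U_n^s(t)<\sum_{l\in\Out_n}C_l^s(t)\le \No\Cm$, in which case $[x]^+\le U_n^s(t)<\No\Cm$, so the offending product $[x]^+\big[\sum_{l\in\Out_n}C_l^s(t)-U_n^s(t)\big]^+$ is bounded by the constant $\No^2\Cm^2$ and can be folded into $B$. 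With that correction (and the analogous justification stated explicitly), your argument goes through and delivers exactly the bound in the lemma.
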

\begin{proof}
see Appendix \ref{sec:proofUpperBound}.
\end{proof}
The parameter  $U_0$ and the set $\tilde{N}(t)$ in the above Lemma are the \textit{ the congestion  threshold} and the set of \textit{congested queues}, respectively. Moreover, we call the congested queues in the set  $\dbtilde{N}(t)$  \textit{the critically congested queues}, since the size of their backlog exceeds the normalized battery level of their corresponding node. Our policy tends to decrease a queue backlog only if the queue is congested. Consequently,  setting the congestion threshold to the smallest possible value  reduces the average queue backlog.  The parameter $\alpha$ can be optimized to achieve the minimum congestion threshold.  Substituting $\C$ in the definition of $U_0$ with $\frac{2\delta}{1- \frac{1}{\alpha}}$,  it can be verified that $U_0$ is minimized at  $\alpha = \sqrt{2}+1$.


As  will be shown in section \ref{sec:performance},  any policy that minimizes the right hand side of  \eqref{eq:dppUpperBound} at each time-slot  stabilizes the network and yields an average power consumption  within a bounded gap to the optimal power consumption. Consequently, we are interested in finding a policy that minimizes the upper bound in  \eqref{eq:dppUpperBound}. For this purpose, we first rearrange the right hand side in   \eqref{eq:dppUpperBound} and rewrite it as follows:

\begin{align}
\begin{split}
 \Delta(L(t)) +V \eexp{  p_{AP}(t)| \bm{U}(t), \bm{E}(t) }  \leq B &+ \eexp{ \sum_{l=1}^L J_{T(l)}(t) p_l(t)- \sum_{l=1}^L \sum_{s=1}^S W_l^s(t)C_l^s(t) |\bm{U}(t), \bm{E}(t)}\\
&+ \eexp{Vp_{AP}(t) - \sum_{n=1}^N Q_{n}J_n(t)    }\\
&+ \sum_{(n,s)\in \tilde{N}(t)}\lambda_n^sU_n^s(t) + \sum_{(n,s)\in \dbtilde{N}(t)}\lambda_n^s \big(U_n^s(t) - \C E_n(t)\big),
\end{split}\label{lem:upperBoundRR}
\end{align} 
where $W_l^s(t)$ is called  \textit{the data  coefficient} of stream $s$ over link $l$ and is defined as
\begin{align}
\begin{split}
W^s_l(t) = &   \big(U^s_{T(l)}(t)-\C E_{T(l)}(t)\big)  \iif{(T(l),s)\in \dbtilde{N}(t)} - \big(U^s_{R(l)}(t)-\C E_{R(l)}(t)\big)  \iif{(R(l),s)\in \dbtilde{N}(t)}  \\
& +U^s_{T(l)}(t)   \iif{(T(l),s)\in \tilde{N}(t)} -   U^s_{R(l)}(t) \iif{(R(l),s)\in \tilde{N}(t)}.
\end{split}
\label{eq:dataCoeff}
\end{align}
Furthermore,  $J_n(t)$ is called   \textit{the power coefficient} for node $n$ and is defined as 
\begin{align}
J_n(t) = \C\sum_s \big(U^s_{n}(t)-\C E_{n}(t)\big)  \iif{(n,s)\in \dbtilde{N}(t)}.
\label{eq:powerCoeff}
\end{align}

In order to minimize the right hand side of \eqref{lem:upperBoundRR}, it suffices to minimize the inner terms of the two expectations as the other terms are constant with respect to the control variables $\bm{p}(t), p_{AP}(t), \bm{w}(t)$ and $C_l^s(t)$. To minimize the first expectation, we first  allocate the whole capacity of each link to the stream with greatest data coefficient over that link, and then we select the minimizing power vector. Furthermore,  minimization of  the second expectation can be decomposed into beamforming vector selection and the EAP transmission power selection. We rewrite  the term inside the expectation as
\begin{align}
\begin{split}
 F(t) &\triangleq p_{AP}(t)\left[ V - \bm{w}(t)\left( \sum_{n=1}^L J_n(t) \bm{h}_{n}^T(t)\bm{h}_{n}(t)\right) \bm{w}^T(t)\right] \\
&=p_{AP}(t)\left[ V - \bm{w}(t)\bm{H}(t) \bm{w}^T(t)\right],
\end{split}
\label{eq:energyLinkControl}
\end{align}
where
\begin{align}
\bm{H}(t) =  \sum_{n = 1}^N J_n(t) \bm{h}_{n}^T(t)\bm{h}_{n}(t).
\label{eq:sumChan}
\end{align}
 It can be verified that the term inside the brackets is minimized if we select the beamforming vector  $\bm{w}^\ast(t)$ in direction of the eigenvector of $\bm{H}(t)$ with maximum eigenvalue. Substituting $\bm{w}^\ast(t)$ in \eqref{eq:energyLinkControl}, $F(t)$ is then minimized by determining  $p^\ast_{AP}(t)$  according to following rule

\begin{align}
p^\ast_{AP}(t) = 
\begin{cases}
\Pam, & V<  \sum_{n = 1}^N    |\bm{w}^\ast(t)\bm{h}^T_{n}(t)|^2J_n(t),  \\
0& \text{otherwise.}
\end{cases}
\end{align}

The data link control and energy link control polices are summarized in Algorithms \ref{alg:DataLink} and \ref{alg:EnergyLink}, respectively.
 
It should be noted that finding the optimal power vector $\bm{p}^\ast(t)$ in the data link control policy requires solving the max-weight problem 
$$ \bm{p}^\ast(t) = \argmin_{\bm{p}(t)\in\Pi}{   \sum_{l=1}^L  \big[ J_{T(l)}(t)  p_l(t)  -  W_l(t)C_l(\bm{p}(t),\bm{g}(t)) \big]},$$
which can be NP-hard in general. However,  in certain cases, e.g., in interference-free networks, closed-form solutions can be found. Furthermore, approximate solutions for this problem 
results in a bounded optimality gap in the overall performance.  The approximate solutions have been extensively discussed in \cite[Chapter~6]{Neely2010}.

\begin{algorithm}[t]
\caption{Data link control policy at each time-slot t (Data Routing)}\label{alg:DataLink}
\begin{algorithmic}[1]

\State Calculate $W_l^s(t),\;\forall l,s,$ and $J_l(t),\;\forall l$, according to \eqref{eq:dataCoeff} and \eqref{eq:powerCoeff}.
\State Find $W_l(t) \leftarrow \max_{s}\left \{W_l^s(t)\right\}$.
\State Find 
$\bm{p}^\ast(t) \leftarrow \argmin_{\bm{p}(t)\in\Pi}   \sum_{l=1}^L  \big[ J_{T(l)}(t)  p_l(t) -  W_l(t)C_l(\bm{p}(t),\bm{g}(t)) \big].$
\State Find $ s^\ast _l\leftarrow \argmax_{s\in[1,S]} {W_l^s(t)}$, and set $ C_l^{s^\ast_l}(t) \leftarrow C_l(\bm{p}^\ast(t),\bm{g}(t)),\; \forall l$.
\State Update data queues according to \eqref{eq:queueEvolve}.
\end{algorithmic}
\end{algorithm}

\begin{algorithm}[t]
\caption{Energy  link control policy at each time-slot t (Power Transfer Scheduling)}\label{alg:EnergyLink}
\begin{algorithmic}[1]
\State Calculate  $J_{n}(t),\;\forall n$, according to \eqref{eq:powerCoeff}.
\State Calculate the sum channel matrix  $\bm{H}$ according to \eqref{eq:sumChan}.
\State Derive the eigenvalues and eigenvectors of $\bm{H}$ and set $\bm{w}^\ast$ equal to the eigenvector with the largest eigenvalue.
\If{$V <  \sum_{n = 1}^N    |\bm{w}^\ast(t)\bm{h}^T_{n}(t)|^2J_n(t)$}
\State Set $p_{AP}(t) \leftarrow \Pam$.
\Else
\State Set $p_{AP}(t) \leftarrow 0$.
\EndIf

\end{algorithmic}
\end{algorithm}

\section{Performance Analysis of the Proposed Control Policies}\label{sec:performance}
 In this section, we first derive a lower bound on the minimum required power for stability. We then use Lyapunov Optimization Theorem \cite{Neely2010} to compare the proposed control policy  to the derived lower bound.

\subsection{Lower Bound on the Minimum Power for Stability}

In order to obtain a lower bound on the minimum power that stabilizes the queue of each link, we substitute the instantaneous battery constraint \eqref{eq:batteryConstraint} with a more relaxed constraint on the average power consumption. The battery constraint along with   \eqref{eq:batteryEvolve}   imply that
\begin{align}
\begin{split}
\sum_{t=0}^{T-1} \sum_{l\in\Out_n}p_l(t) \leq E_n(0) + \sum_{t=0}^{T-1} Q_n(t)\Rightarrow\limsup_{T\rightarrow \infty} {1\over T}\sum_{t=0}^{T-1} \sum_{l\in\Out_n} \eexp{p_l(t)} \leq \limsup_{T\rightarrow \infty} {1\over T}   \sum_{t=0}^{T-1} \eexp{Q_n(t)} \;\; \forall n, 
\end{split}
\label{eq:avgBatteryConstraint}
\end{align}
where a limited initial battery charge is assumed. The last inequality in \eqref{eq:avgBatteryConstraint} holds under any policy that conforms to the battery constraint. We name \eqref{eq:avgBatteryConstraint} the \textit{average power constraint} and use it as a substitute for the battery constraint \eqref{eq:batteryConstraint}.

Let  $\bm{\lambda} = (\lambda_1,\ldots, \lambda_S)$ denote the data streams arrival rate vector. We define the capacity region $\Lambda$ as the set of all data arrival rate vectors that can be stabilized under the average power constraint (interested readers are referred to \cite{Neely2003} for more details on the network capacity region). The following theorem introduces a randomized policy that achieves the minimum power consumption over all other polices with average power constraint.  

\color{black}{
\begin{theorem}
Suppose that channel states and data arrivals are i.i.d over different time-slots. Moreover, assume that the arrival rates belong to the capacity region (i.e., $\bm{\lambda} \in \Lambda$). The minimum power required for stability, $P^\ast_{AP}$, can be obtained by a stationary and probably randomized policy. This policy is a pure function of $\bm{H}(t)$, $\bm{g}(t)$ and $A_n^s(t),\; \forall n,s$,  with the following properties
\begin{align}
\lambda_n^s + \sum_{l \in \In_n}\eexp{ C_l^s(t)} &\leq \sum_{l \in \Out_n}\eexp{ C_l^s(t)}, \; \forall n,s,t,\\
\sum_{l \in \Out_n}\eexp{ p_l(t)} &\leq  \eexp{Q_n(t)},\;\forall n,t.
\end{align}
\begin{proof}
The proof   is similar to the proof of Theorem 4.5 in \cite{Neely2010}, and hence,  is omitted for brevity. 
\end{proof}
\label{th:randomOptimal}
\end{theorem}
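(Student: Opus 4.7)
The plan is to adapt the standard Lyapunov-optimization existence argument of Neely (Theorem 4.5 in the cited monograph). Let $P^{\ast}_{AP}$ denote the infimum of $\bar{p}_{AP}$ taken over all control policies that satisfy queue stability \eqref{eq:stableConstraint} together with the relaxed average power constraint \eqref{eq:avgBatteryConstraint}. Because $\bm{\lambda}\in\Lambda$, this infimum is finite. My aim is to exhibit a stationary, possibly randomized policy that attains this infimum and whose decisions at time $t$ depend only on the current random state $(\bm{H}(t),\bm{g}(t),\{A_n^s(t)\})$.

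First I would derive the necessary conditions. For any admissible policy, stability of each queue $U_n^s$ forces, via the queue recursion \eqref{eq:queueEvolve} and an expected-time-average argument, that the long-run mean inflow is at most the long-run mean outflow, i.e.\ $\lambda_n^s + \limsup_T \tfrac{1}{T}\sum_{t=0}^{T-1}\eexp{\sum_{l\in\In_n}C_l^s(t)} \le \liminf_T \tfrac{1}{T}\sum_{t=0}^{T-1}\eexp{\sum_{l\in\Out_n}C_l^s(t)}$. The relaxed battery condition \eqref{eq:avgBatteryConstraint} gives the analogous inequality between total per-node transmit power and harvested energy.

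Next, I would use the characterization of the capacity region $\Lambda$ under i.i.d.\ channel states and arrivals: the set of achievable long-run rate/power vectors coincides with the convex hull of the vectors achievable by pure-state-dependent (deterministic) stationary policies. By Caratheodory's theorem, any point in this convex hull — in particular the one that minimizes expected EAP power subject to the flow-conservation and average-power inequalities above — can be realized by randomizing over finitely many state-pure decisions. The resulting policy is stationary, its actions depend only on $(\bm{H}(t),\bm{g}(t),A_n^s(t))$, it satisfies both inequalities in the theorem statement by construction, and it attains objective value $P^{\ast}_{AP}$.

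The main obstacle is showing that the infimum is attained rather than merely approached. This requires a compactness/closedness argument: per-slot decisions lie in a compact set (the power vector is bounded by $\Pm$ and $\Pam$, the beamformer is unit-norm, capacities are bounded by $\Cm$), and under the i.i.d.\ assumption the set of long-run average reward/constraint vectors achievable by stationary randomized policies is closed. Passing to a minimizing sequence of such policies and extracting a limit point whose distribution over actions is still a measurable function of the state — and verifying that the limit inherits the inequality constraints — is the technical heart of the argument and is exactly what the cited Theorem 4.5 of Neely supplies, which is why the authors can invoke it directly and omit the details.
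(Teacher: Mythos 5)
Your proposal is correct and follows essentially the same route as the paper, which simply invokes Theorem~4.5 of the cited Neely monograph: the necessary flow-conservation and average-power inequalities, the Caratheodory/convex-hull representation of achievable rate--power vectors under i.i.d.\ states, and the closedness argument guaranteeing the infimum is attained are precisely the ingredients of that theorem. Nothing in your sketch deviates from what the authors omit for brevity.
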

}

Note that Theorem \ref{th:randomOptimal} only states that such stationary optimal policy with aforementioned properties exists, and does not derive such policy. In sequel, we use these properties to compare the average power consumption under our proposed policy to the  lower bound on the minimum required power for stability, i.e., $P^\ast_{AP}$.

\subsection{Performance of the Proposed Policy}
In this section, we derive the optimality gap of our proposed policy. Moreover, we show that the proposed policy stabilizes  the network and conforms to the battery constraint.
The following theorem summarizes the performance of the proposed policy,
\begin{theorem}\label{th:performanceTh}
Suppose the channel states and data arrivals are i.i.d over time-slots, and the arrival rates are strictly inside the capacity region, i.e., there is a scalar $\epsilon_{max}$ such that $\forall \epsilon \in (0,\epsilon_{max}]: \bm{\lambda} + \bm{\epsilon} \in \Lambda$, where $\bm{\epsilon}$ is a vector with all entries equal to $\epsilon$ . Under the proposed policy,
\begin{enumerate}
\item At any time-slot $t$, the transmission power assigned to data links originated from node $n$ are nonzero only if its battery level is higher than the maximum data transmission power, i.e., $E_n(t) > \Pm$.
\item The time average expected power consumption  satisfies,
\begin{align}
\limsup_{T\rightarrow \infty}{1\over T} \sum_{t = 0}^{T-1}\mathbb{E}\{ p_{AP}(t)\} &\leq p^{\ast}_{AP} + {B \over V}.
\label{eq:optimalityGap}
\end{align}
\item The queues are stable and time average expected sum backlog satisfies,
\begin{align}
\limsup_{T\rightarrow \infty}{1\over T} \sum_{t = 0}^{T-1} \sum_{n,s}\mathbb{E}\{U^s_n(t)\} & \leq {V p^{\ast}_{AP} + B^\prime \over \epsilon_{max}},
\label{eq:stability}
\end{align}
where $B^\prime = B + \epsilon_{max}NSU_0$.
\end{enumerate}
\end{theorem}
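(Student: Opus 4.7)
The plan is to prove the three parts in sequence, with parts 2 and 3 sharing a common Lyapunov-drift comparison argument while part 1 exploits the structure of the data-link control step.

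For part 1, I argue by a swap contradiction inside the step-3 minimization of Algorithm \ref{alg:DataLink}. Fix a node $n$ with $E_n(t) \leq \Pm$ and suppose the minimizer $\bm{p}^*$ has $p_l^*(t) > 0$ for some $l \in \Out_n$. Let $\bm{p}'$ agree with $\bm{p}^*$ except with the $l$-th entry set to zero. Using \eqref{eq:noPowerNoRate} to kill $C_l$ under $\bm{p}'$, \eqref{eq:rateBound} to replace $C_l(\bm{p}^*, \bm{g})$ by $\delta p_l^*$, and \eqref{eq:interference} (together with the fact that step 4 only allocates rate to streams whose $W_{l'}$ is positive) to handle the remaining links, the decrease in the step-3 objective from $\bm{p}^* \to \bm{p}'$ is at least $p_l^*[J_n(t) - \delta W_l^{s^*}(t)]$, with $s^* = \argmax_s W_l^s(t)$. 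It then suffices to show $J_n(t) \geq \delta W_l^{s^*}(t)$. A positive $W_l^{s^*}$ forces $(T(l),s^*) \in \tilde{N}(t)$; if additionally $(T(l),s^*) \notin \dbtilde{N}(t)$, then $U^{s^*}_n \leq \C E_n \leq \C\Pm < U_0$, contradicting $U^{s^*}_n > U_0$, so $(T(l),s^*) \in \dbtilde{N}(t)$. Bounding $W_l^{s^*} \leq 2U^{s^*}_n - \C E_n$ (dropping the non-positive receiver contribution) and $J_n \geq \C(U^{s^*}_n - \C E_n)$, the desired inequality reduces to $(\C - 2\delta)U^{s^*}_n \geq \C(\C - \delta) E_n$. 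Combining $U^{s^*}_n \geq U_0 = \Pm(\C + \alpha\delta) \geq (\C+\alpha\delta) E_n$ with the algebraic identity $\alpha(\C-2\delta) = \C$ implied by $\C = 2\delta\alpha/(\alpha-1)$ closes this inequality with zero slack.

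For part 2, I invoke Lemma \ref{lem:upperBound}. Since Algorithms \ref{alg:DataLink} and \ref{alg:EnergyLink} minimize its right-hand side slot by slot over all feasible decisions, the drift-plus-penalty attained by the proposed policy is no larger than under any alternative feasible policy; in particular, the stationary randomized optimum from Theorem \ref{th:randomOptimal}, which attains $\mathbb{E}\{p_{AP}\} = p^{\ast}_{AP}$ while satisfying the flow-balance and average-power inequalities stated there. Substituting this policy's expectations into the upper bound, the coefficient of each $U_n^s(t)$ in the $\tilde{N}(t)$ sum becomes nonpositive after absorbing $\lambda_n^s$, and in the $\dbtilde{N}(t)$ sum the $\C\mathbb{E}\{\sum_{\Out_n}p_l\}$ contribution is dominated by $\C\mathbb{E}\{Q_n\}$, which cancels exactly against the $-\C\mathbb{E}\{Q_n(U_n^s - \C E_n)\}$ term on the penalty line. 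What remains is $\Delta(L(t)) + V\mathbb{E}\{p_{AP}(t)|\bm{U}(t),\bm{E}(t)\} \leq B + V p^{\ast}_{AP}$. Taking iterated expectations, telescoping over $t=0,\ldots,T-1$, using $L(T) \geq 0$ and $L(0)$ finite, and dividing by $VT$ yields \eqref{eq:optimalityGap}.

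For part 3, I repeat the comparison but with the stationary policy that stabilizes $\bm{\lambda} + \bm{\epsilon}$, guaranteed to exist for every $\epsilon \in (0,\epsilon_{max}]$ by the interior hypothesis and Theorem \ref{th:randomOptimal}. The $\epsilon$-slack in the flow inequalities turns each $U_n^s(t)$ coefficient in the $\tilde{N}(t)$ sum into $\leq -\epsilon$ (and similarly for the $\dbtilde{N}(t)$ sum), yielding $\Delta(L(t)) + V\mathbb{E}\{p_{AP}(t)|\bm{U}(t),\bm{E}(t)\} \leq B + V p^{\ast}_{AP} - \epsilon \sum_{(n,s)\in \tilde{N}(t)} U_n^s(t)$. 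Since at most $NS$ pairs $(n,s)$ fall outside $\tilde{N}(t)$ and each such $U_n^s(t) \leq U_0$, one has $\sum_{(n,s)\in\tilde{N}(t)} U_n^s(t) \geq \sum_{n,s} U_n^s(t) - NSU_0$. Taking iterated expectations, telescoping over $t=0,\ldots,T-1$, letting $\epsilon = \epsilon_{max}$ and dividing by $\epsilon_{max} T$ yields \eqref{eq:stability} with $B' = B + \epsilon_{max}NSU_0$.

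The main obstacle is part 1: the thresholds $\C = 2\delta\alpha/(\alpha-1)$ and $U_0 = \Pm(\C + \alpha\delta)$ are tuned so that the swap inequality holds with \emph{zero} slack, so the identity $\alpha(\C-2\delta) = \C$ must be unpacked carefully and combined with the transmitter/receiver sign book-keeping in \eqref{eq:dataCoeff}. Once that step is secured, parts 2 and 3 follow the standard min-drift-plus-penalty template of \cite{Neely2010}, with the only novelty being the cancellation of the $\C Q_n(U_n^s - \C E_n)$ terms across the $\dbtilde{N}(t)$ coefficient and the penalty line of Lemma \ref{lem:upperBound}.
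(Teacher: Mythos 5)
Your proposal is correct and follows essentially the same route as the paper: part 1 via the same zeroing/swap argument on the max-weight objective, reducing to $J_n \ge \delta W_l$ through the bound $W_l \le 2U_n^{s}-\C E_n$ and the zero-slack identity $(\C-2\delta)(\C+\alpha\delta)=\C(\C-\delta)$, and parts 2--3 via comparison of the drift-plus-penalty bound of Lemma \ref{lem:upperBound} against the stationary randomized policy of Theorem \ref{th:randomOptimal}, telescoping, and the $NSU_0$ compensation for the non-congested queues. The only (immaterial) difference is that you compare directly against the $p^{\ast}_{AP}$-achieving policy for part 2, whereas the paper uses the $\bm{\lambda}+\bm{\epsilon}$ policy and lets $\epsilon\rightarrow 0$.
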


It should be noted that part 1 in Theorem \ref{th:performanceTh} guarantees that our proposed policy does not violate the battery level constraint. Moreover, parts 2 and 3 show the optimality of the power consumption  and stability of the network under our proposed policy, respectively. 
 \begin{remark}
Note that the performance bounds in $\eqref{eq:optimalityGap}$ and $\eqref{eq:stability}$ introduce a trade-off between the optimality gap and the average queue backlog.  According to this trade-off, when the average power consumption is within  $\mathcal{O}({1\over V})$ of the minimum required power, the  average backlog  could be upper bounded by a term of the order of $\mathcal{O}(V)$. 
\end{remark}
Now we prove Theorem \ref{th:performanceTh}.

\begin{proof}
 Part 1 is proven in Appendix \ref{sec:ProofBatteryCons}, optimality (part 2)  and stability (part 3) are proven here.  
 Suppose that the arrival rate is $\bm{\lambda}+ \bm{\epsilon}$. Since the arrivals are i.i.d,  according to Theorem \ref{th:randomOptimal} there is a stationary randomized policy with the following properties, 
\begin{align}
\begin{split}
\lambda_n^s+\epsilon + \eexp{\sum_{l\in\In_n} C_l^s(t)} &\leq \eexp{\sum_{l\in\Out_n} C_l^s(t)} \;\; \forall n,s,\\
\eexp{ \sum_{l \in \Out_n}  p_l(t)   }& \leq \eexp{ Q_n(t)  } \;\; \forall n,\\
\eexp{p_{AP}(t)} &= p^{\ast}_{AP}(\epsilon),
\end{split}
\label{eq:rndPolicyProp}
\end{align}
where $p^{\ast}_{AP}(\epsilon)$ is the minimum power required for stability when the arrival rate equals $\bm{\lambda} + \bm{\epsilon}$. Let $\mathcal{P}^{S}_{\epsilon}$ denote the above stationary policy. Our proposed policy minimizes the right hand side of \eqref{eq:dppUpperBound} over any alternative policy including  $\mathcal{P}^{S}_{\epsilon}$.  Plugging the properties in \eqref{eq:rndPolicyProp} into right hand side of \eqref{eq:dppUpperBound} yields, 
\begin{align*}
\Delta(L(t)) &+V \eexp{   p_{AP}(t)| \bm{U}(t), \bm{E}(t)}  \leq B+Vp_{AP}^{\ast}(\epsilon) - \epsilon \sum_{(n,s)\in \tilde{N}(t)}  U_n^s(t).
\end{align*}
Taking expectation with respect to $ \bm{U}(t)$ and $\bm{E}(t)$ from both sides results in
\begin{align*}
\eexp{L(t+1)}& - \eexp{L(t)}  + \eexp{p_{AP}(t)}  \leq B+Vp_{AP}^{\ast}(\epsilon) - \epsilon \sum_{(n,s)\in \tilde{N}(t)}  U_n^s(t).
\end{align*}
Then, summing both sides over $t = 0,\ldots, T-1$ yields

\begin{align*}
\eexp{L(T-1)} - \eexp{L(0)} + \sum_{t=0}^{T-1}\eexp{p_{AP}(t)} \leq \big(B+Vp_{AP}^{\ast}(\epsilon)\big)T - \epsilon \sum_{t=0}^{T-1} \sum_{(n,s) \in \tilde{N}(t)} U_n^s(t).
\end{align*}
Now by rearranging the terms and dropping the negative terms when appropriate, we get the following inequalities:
\begin{align}
{1 \over T}\sum_{t=0}^{T-1}\eexp{p_{AP}(t)} &\leq {B\over V} +p_{AP}^{\ast}(\epsilon) + {\eexp{L(0)}\over T},\label{eq:opt1}\\
{1 \over T} \sum_{t=0}^{T-1} \sum_{(n,s) \in \tilde{N}(t)} U_n^s(t) &\leq {B+Vp_{AP}^{\ast}(\epsilon) \over \epsilon} +  {\eexp{L(0)}\over T},\label{eq:stab1}
\end{align}
The bounds in \eqref{eq:opt1} and \eqref{eq:stab1} can be separately optimized over values of $ \epsilon \in (0, \epsilon_{max}] $. Since $\lim_{\epsilon \rightarrow 0}p_{AP}^{\ast}(\epsilon) =0$, letting $\epsilon \rightarrow 0$ in \eqref{eq:opt1} and taking limits as $T\rightarrow \infty$ concludes the second statement of Theorem \ref{th:performanceTh}. To prove the last statement, we first  substitute the inner summation in \eqref{eq:stab1} with summation over all data queues and then add   the term $NSU_0$ to the right hand side as a compensation.  Setting $\epsilon = \epsilon_{max}$ and taking limits as $T\rightarrow \infty$ completes the proof of the third statement  of Theorem \ref{th:performanceTh}.
\end{proof}

\section{Simulation Results}\label{sec:sim}

\begin{figure}
\centering
\begin{tikzpicture}
   [WD/.style={circle,draw,fill=none,thick,scale =1},
   HAP/.style={rectangle ,draw,fill=none,thick,scale =2},
     ST/.style={rectangle ,fill=none,thick,scale =1},scale=2]

     \node(HAP) at ( 0,0) [HAP,label=below: EAP]{};
     \node(WD1) at ( -1.4,1) [WD,label=above:N1 (source 1)] {};
     \node(WD2) at (-1,0)  [WD,label=below:  N2 (source 2)] {};
     \node(WD3) at (0,0.7)  [WD,label=above: N3] {};
     \node(WD4) at (1,1)  [WD,label=above: N4 (sink 1)] {};
     \node(WD5) at (1,0)  [WD,label=below: N5 (sink 2)] {};
     \node(ST1) at (-2,1)  [ST ] {};
     \node(ST2) at (-1.6,0)  [ST ] {};
 	
     \draw [->,  shorten >=0.1cm,shorten <=0.1cm , line width = 0.3mm] (WD1.east) to (WD3.west); 					
     \draw [->,  shorten >=0.1cm,shorten <=0.1cm ,line width = 0.3mm] (WD2.north east) to (WD3.west); 	
     \draw [->,  shorten >=0.1cm,shorten <=0.1cm ,line width = 0.3mm] (WD3.east) to (WD4.west); 					
     \draw [->,  shorten >=0.1cm,shorten <=0.1cm ,line width = 0.3mm] (WD3.east) to (WD5.west); 		
     \draw [<->,  shorten >=0.1cm,shorten <=0.1cm ,line width = 0.3mm] (WD4.south) to (WD5.north); 				
     \draw [->,  shorten >=0.1cm,shorten <=0.1cm ,dashed,line width = 0.3mm, color = red] (HAP.west) to (WD2.east); 					 
     \draw [->,  shorten >=0.1cm,shorten <=0.1cm ,dashed,line width = 0.3mm, color = red] (HAP.north west) to (WD1.south east); 		 
     \draw [->,  shorten >=0.1cm,shorten <=0.1cm ,dashed,line width = 0.3mm, color = red] (HAP.north) to (WD3.south); 			        
     \draw [->,  shorten >=0.1cm,shorten <=0.1cm ,dashed,line width = 0.3mm, color = red] (HAP.north east) to (WD4.south west); 		 
     \draw [->,  shorten >=0.1cm,shorten <=0.1cm ,dashed,line width = 0.3mm, color = red] (HAP.east) to (WD5.west); 			        	 
     \draw [->, dashed,line width = 0.5mm] (ST1.east) to (WD1.west); 			        
     \draw [->, dashed,line width = 0.5mm] (ST2.east) to (WD2.west); 			        
\end{tikzpicture} 
\caption{The considered topology. The black and  dashed red arrows represent the data links and energy links, respectively.}
\label{fig:simTopol}
\end{figure}
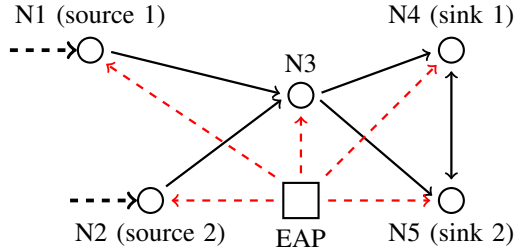

In this section, we consider a wireless network consisting of one EAP and five wireless nodes, as shown in Fig.  \ref{fig:simTopol}. There are two streams of data, from node 1 to node 4 and from node 2 to node 5, with average arrival rates of $\lambda_1 = 100$ bit/slot and $\lambda_2 = 50$ bit/slot, respectively. The data links and energy links channel states are generated according to Rician fading model \cite{Rician2015}, with Rician factor equal to 1. The EAP is equipped with $M=50$ antennas that are configured as a half wavelength separated array. Moreover, similar to existing works in literature (e.g see \cite{Gurakan2016}    ) we assume no interference across the data links and consider an AWGN model for their capacity, as follows
\begin{align*}
C_l(\bm{p}(t),\bm{g}(t)) = W\log\left(1+ {p_l(t)|g_l(t)|^2\over WN_0}\right),
\end{align*}
where $W = 10$ kHz  and $N_0 = -135$ dBm/Hz are  the channel bandwidth and the noise spectral density, respectively. Finally, the maximum transmission power of the EAP and the nodes are considered to be $\Pm = \SI{4}{\micro\watt}$ and $\Pam = \SI{4}{\watt}$, respectively.  All numerical results have been obtained by running the simulation for $6\times 10^6$ time-slots using Matlab 2015a on a simulation platform with 20 cores and 256 GB of  RAM.

Fig. \ref{fig:res} shows the average power consumption of the EAP as well as the average backlog of the data queues in the network, versus the trade off parameter $V$. As can be seen in Figures \ref{fig:avgPower} and \ref{fig:avgBacklog}, the average power consumption decays very fast as $V$ increases, while the average data queue backlog increases linearly with $V$.   Such behavior  complies with our theoretical results derived in  \eqref{eq:optimalityGap} and   \eqref{eq:stability}.
\begin{figure}
\centering
\begin{subfigure}[b]{0.4\textwidth}
\includegraphics[width =\textwidth]{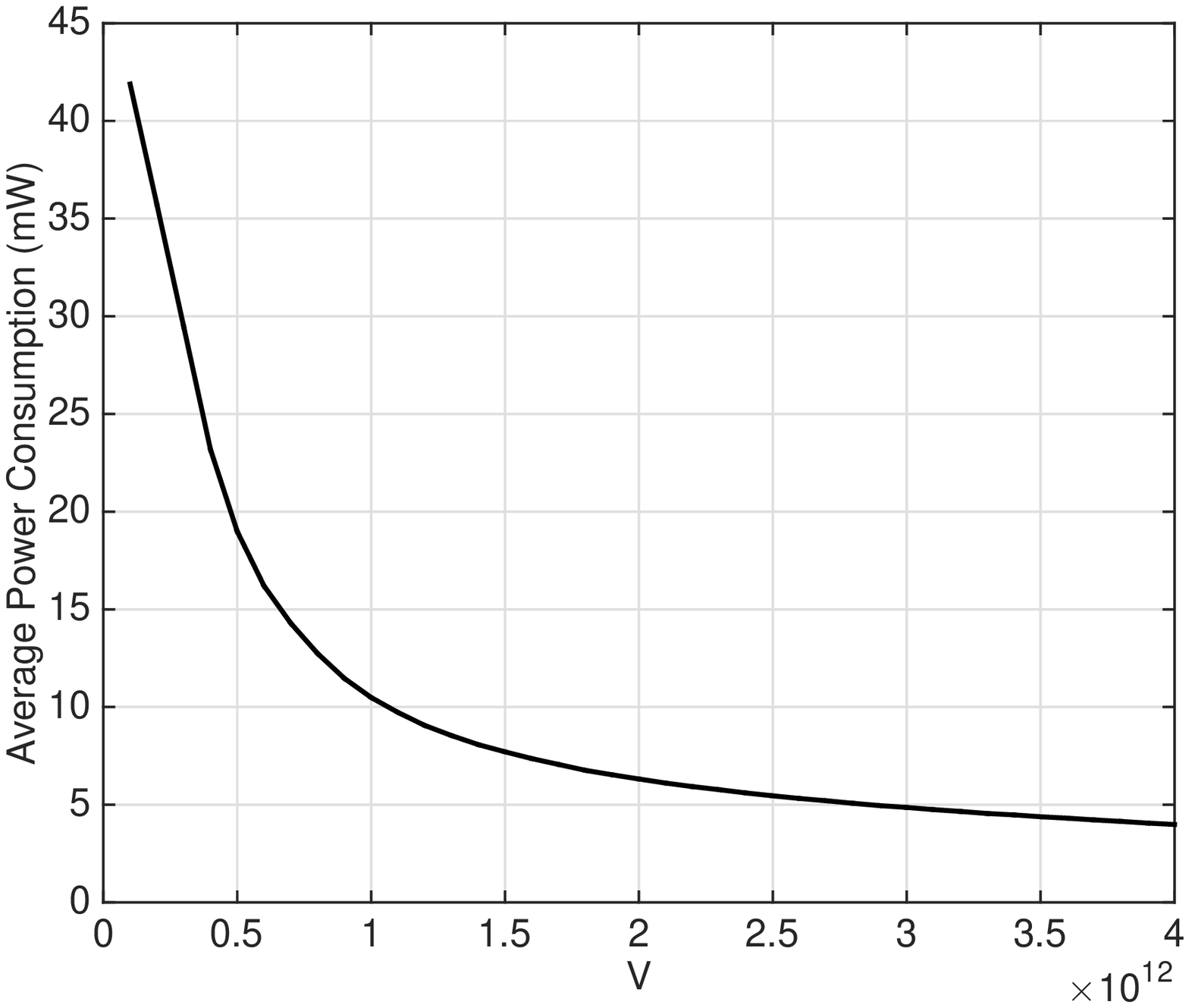}
\caption{Average power consumption}
\label{fig:avgPower}
\end{subfigure}
~
\begin{subfigure}[b]{0.4\textwidth}
\includegraphics[width = \textwidth]{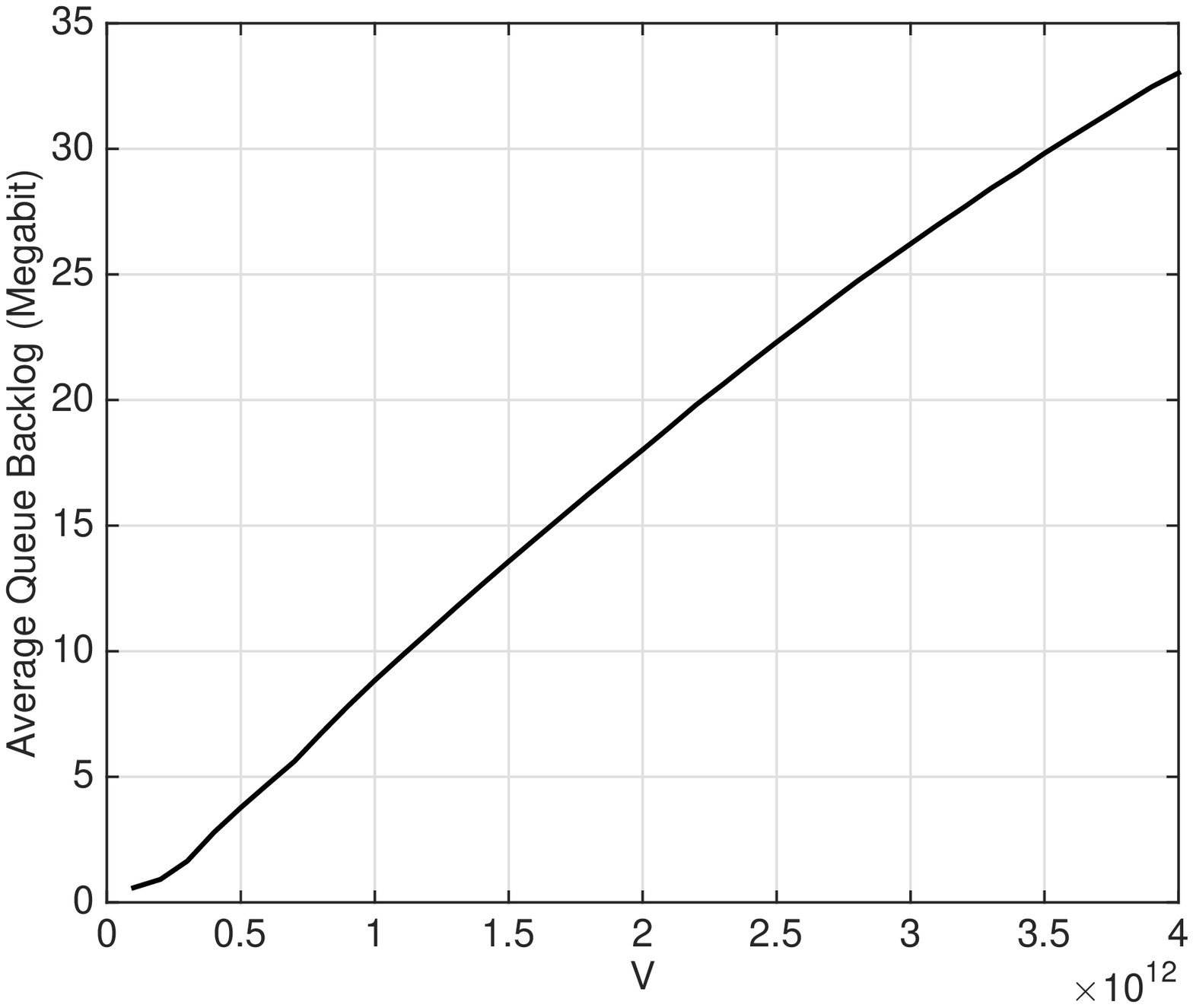}
\caption{Average queue backlog}
\label{fig:avgBacklog}
\end{subfigure}
\caption{Average EAP power consumption and queue backlog versus $V$.}
\label{fig:res}
\end{figure}

Next, Fig. \ref{fig:sampPath} depicts a sample path for the data queue backlog process and the  battery level process of node 1 (for $V = 2\times 10^{12}$).  It can be verified from this figure  that the queue backlog is stabilized around $75$ Megabits while no energy outage has occurred. 

\begin{figure}
\centering
\begin{subfigure}[b]{0.4\textwidth}
\includegraphics[width = \textwidth]{SampleData.eps}
\caption{Data queue backlog}
\label{fig:sampBackLog}
\end{subfigure}
~
\begin{subfigure}[b]{0.4\textwidth}
\includegraphics[width =\textwidth]{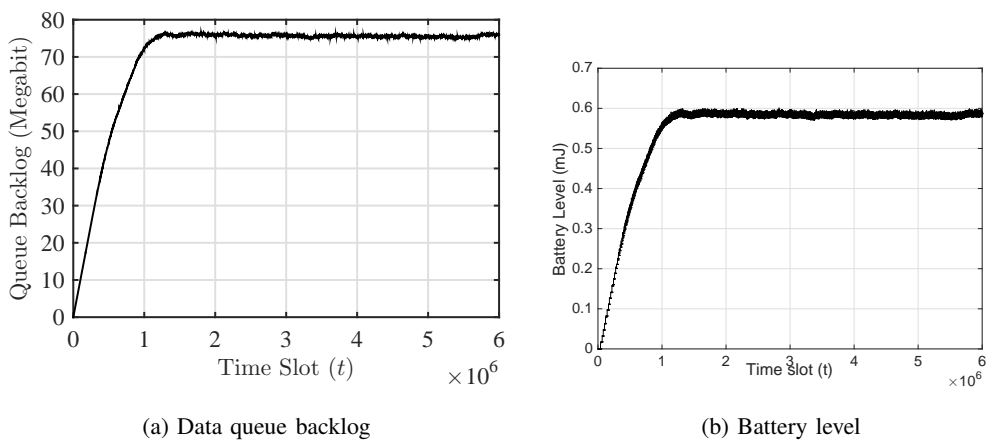}
\caption{Battery level}
\label{fig:sampBattery}
\end{subfigure}
\caption{A sample path for data queue backlog and battery process.}
\label{fig:sampPath}
\end{figure}
Finally, Fig. \ref{fig:AvgPat} shows the average transmission pattern of the EAP  (for $V = 2\times 10^{12}$). As can be  clearly  seen in this figure, there are three distinguished peaks in the transmission patter of the EAP at the direction toward  node 1, 2 and 3 which are the most congested nodes in this network. Note that the peak at $0^\circ$ is due to the  linear structure of the EAP antenna (the gains at $0^\circ$ and $180^\circ$ for a linear array are reciprocal). Moreover, as can be clearly verified from this figure, the maximum value of the pattern is at the direction toward node 3. This is due to the fact that all network traffic pass through this node.
\begin{figure}
\centering
\includegraphics[width = 0.4\textwidth]{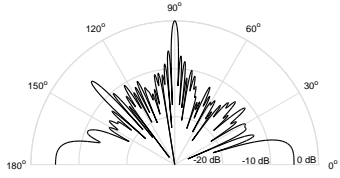}
\caption{Average transmission pattern of the EAP. }
\label{fig:AvgPat}
\end{figure}

%
\section{Conclusion}\label{sec:conclude}
In this paper, we focused on a wireless powered communication network with battery-operated nodes and proposed  a joint power allocation, data routing and energy beamforming policy to minimize the average power consumption in the network. The proposed policy adapts to general networks with arbitrary channel models, without any knowledge of the channel statistics. By theoretical analysis, we proved that our proposed policy conforms to the battery constraint and stabilizes the network. Moreover, we derived the optimality gap for the average power consumption under this policy. Finally, various numerical results are provided to show the significant performance of the proposed solution.

\appendices
\section{Upper Bound for Drift Plus Penalty Function }\label{sec:proofUpperBound}
Here we prove the inequality in  \eqref{eq:dppUpperBound} holds. We enumerate three different cases for $U^s_n(t)$ and $E_n(t)$, and bound the increment of $L^s_n(t)$ in successive time-slots for each case: 

\begin{itemize}

\item $U^s_n(t) \leq \Ut$:

According to \eqref{eq:LyapunovFun} the increment of $L^s_n(t)$ can be bounded as,
\begin{align}
L^s_n(t+1)-&L^s_n(t) \leq L^s_n(t+1)  \leq \oot U^s_n(t+1)^2 + \oot U^s_n(t+1)^2=  U^s_n(t+1)^2.
\end{align}
From  $U^s_n(t) \leq \Ut$ and \eqref{eq:queueEvolve} we know $U^s_n(t+1) \leq \Ut + \Am+\Ni\Cm$,  where $\Ni$ is the number of ingoing links to the node with most ingoing links in the network,  using this inequality in the above we get, 
\begin{align}
L^s_n(t+1)-L^s_n(t) \leq (\Ut + \Am+\Ni\Cm)^2 \triangleq B_0
\end{align}

\item $U^s_n(t) > \Ut$ and $U^s_n(t) \leq \C E_n(t)$: 

In this case we have, 
\begin{align*}
L^s_n(t+1)-L_n^s(t) &= \oot \big ( U^s_n(t+1)^2 -  U^s_n(t)^2 \big ) + \oot \big(U^s_n(t+1)-\C E_n(t+1)\big)^2 \mathbf{1}_{\left(U^s_n(t+1) > \C E_n(t+1)\right)}\\
&\leq \oot \left ( U^s_n(t+1)^2 -  U^s_n(t)^2 \right ) \oot \big(U^s_n(t+1)-\C E_n(t+1)\big)^2\\
&\overset{a}{\leq}  \left [A^s_n(t) +\sum_{l\in \In_n} C_l^s(t) - \sum_{l\in \Out_n} C_l^s(t) \right] U^s_n(t) + \oot(\No^2+\Ni^2)\Cm^2+\oot \Am^2+\Am\Ni\Cm \\
&\qquad+\oot \big(U^s_n(t+1)-\C E_n(t+1)\big)^2, \label{eq:inequality0}
\end{align*}
  where $\No$ is the number of outgoing links from the node with most outgoing links in the network. Inequality $a$ is achieved by substituting $U^s_n(t+1)$ with the LHS of \eqref{eq:queueEvolve}. 
Finally from $U^s_n(t) \leq \C E_n(t)$, \eqref{eq:queueEvolve}  and \eqref{eq:batteryEvolve} it is easy to verify that,
\begin{align*}
 \oot \big(U^s_n(t+1)-&\C E_n(t+1)\big)^2 \leq\oot \big(\Am +\Ni\Cm+\C\No\Pm\big)^2, 
\end{align*}
and we conclude,
\begin{align}
&L^s_n(t+1)-L^s_n(t) \leq \left [A^s_n(t) +\sum_{l\in \In_n} C_l^s(t) - \sum_{l\in \Out_n} C_l^s(t) \right] U^s_n(t)  + B_1,
\end{align}
where $B_1 \triangleq \oot(\No^2+\Ni^2)\Cm^2+\oot \Am^2+\Am\Ni\Cm +\oot \big(\Am +\Ni\Cm+\C\No\Pm\big)^2 $.

\item $U^s_n(t) > \Ut$ and $U^s_n(t) > \C E_n(t)$:

For this case we have:
\begin{align*}
L^s_n(t+1)-L^s_n(t) \leq&\oot U^s_n(t+1)^2 - \oot U^s_n(t)^2  \\
&+ \oot \left(U^s_n(t+1)-\C E_n(t+1)\right)^2 \\
&-\oot\left(U^s_n(t)-\C E_n(t)\right)^2.
\end{align*}
Replacing $U^s_n(t+1)$ and $E_n(t+1)$ from \eqref{eq:queueEvolve} and \eqref{eq:batteryEvolve} in the above, after some algebraic manipulation we get,
\begin{align*}
L^s_n(t+1)-L^s_n(t) &\leq \left [A^s_n(t) +\sum_{l\in \In_n} C_l^s(t) - \sum_{l\in \Out_n} C_l^s(t) \right] U^s_n(t)  \\
 &  +\left [A^s_n(t) +\sum_{l\in \In_n} C_l^s(t) - \sum_{l\in \Out_n} C_l^s(t) \right] \big(U^s_n(t) - \C E_n(t)\big) \\
&  - \C\left[Q_n(t)-\sum_{l\in \Out_n}p_l(t)  \right]\big(U^s_n(t) - \C E_n(t)\big)+B_2,
\end{align*}
where $B_2 \triangleq (\No^2+\Ni^2)\Cm^2 + \Am^2+2\Am\Ni\Cm+\oot\C^2\No^2\Pm^2+\oot\C^2\Pam^2+\C\Pam+\oot\No\Cm +\No\Ni\Pm\Cm+\Am\No\Pm $
\end{itemize}

Considering the above three cases and taking summation over $L^s_n(t+1) - L^s_n(t)$ for $n = 1,\ldots,N$ and $s = 1, \ldots, S$ we would have,
\begin{align}
\begin{split}
\Delta L(t)  = \sum_{n=1}^N \sum_{s =1}^S L^s_n(t+1) - L^s_n(t) &\leq B+\sum_{(n,s)\in \tilde{N}(t)}  \left [A^s_n(t) +\sum_{l\in \In_n} C_l^s(t) - \sum_{l\in \Out_n} C_l^s(t) \right] U^s_n(t) \\
&+ \sum_{(n,s)\in \dbtilde{N}(t)} \bigg [A^s_n(t) +\sum_{l\in \In_n} C_l^s(t) - \sum_{l\in \Out_n} C_l^s(t) \bigg] \big(U^s_n(t) - \C E_n(t)\big)\\
&- \C\sum_{(n,s)\in \dbtilde{N}(t)}\left[Q_n(t)-\sum_{l\in \Out_n}p_l(t)  \right]\big(U^s_n(t) - \C E_n(t)\big),
\end{split}\label{eq:inequalityDrift}
\end{align}
where $B \triangleq N\max\{B_0,B_1,B_2\}$. Adding $VP_A(t)$ to both sides of \eqref{eq:inequalityDrift}, taking expectation conditioned on $\bm{U}(t)$ and $\bm{E}(t)$ and rearranging the terms  proves the intended  result.

\section{The Proposed Policy Conforms to Battery Constraint}\label{sec:ProofBatteryCons}
Here we prove part 1 of Theorem \ref{th:performanceTh}.  Let us assume $ E_n(t) <\Pm$ for a specific node $n$. Consider data link $l$ such that $T(l) =n$ and a power vector $\bm{p}(t)$. Let  us define another power vector, $\bm{p}_0(t)$, by setting the $l$th entry in $\bm{p}(t)$ to zeros. The  transmission power for data links are determined by the solution of the  minimization problem,
\begin{align}
\argmin_{\bm{p}(t)\in\Pi}{ G(\bm{p}(t)) =   \sum_{l=1}^L  \big[ J_{T(l)}(t)  p_l(t)  -  W_l(t)C_l(\bm{p}(t),\bm{g}(t)) \big]}.
\end{align}
%
To prove the intended result, it suffices to show $G(\bm{p}(t)) - G(\bm{p}_0(t)) \geq0$. The following always hold,
\begin{align}
G(\bm{p}(t))  G(\bm{p}_0(t)) &= p_l(t)J_{T(l)}(t) - W_l(t) \big[ C_l(\bm{p}(t),\bm{g}(t))  - C_l(\bm{p}_0(t),\bm{g}(t))\big]
\\
& -\;\;\sum_{l^\prime\neq l} W_{l^\prime}(t) \big[C_{l^\prime}(\bm{p}(t),\bm{g}(t) )- C_{l^\prime}(\bm{p}_0(t),\bm{g}(t))\big]\\ 
&\geq p_l(t)J_{T(l)}(t) - W_l(t) \big[ C_l(\bm{p}(t),\bm{g}(t)) - C_l(\bm{p}_0(t),\bm{g}(t))\big]\label{eq:ineqa}\\
&\geq p_l(t)J_{T(l)}(t) - \delta p_l(t)W_l(t),\label{eq:ineqb}
\end{align}
where \eqref{eq:ineqa} and \eqref{eq:ineqb} are due to the properties of the capacity function in  \eqref{eq:interference} and  \eqref{eq:rateBound}. It can be verified that $E_n(t) < \Pm$ together with  $U_n^s \in \tilde{N}(t)$ contradicts  $U_n^s \notin \dbtilde{N}(t)$. Plugging this property into definition of $W_l(t)$ and neglecting negative terms yields,
\begin{align*}
W_l(t) < \sum_{s: (T(l),s)\in \dbtilde{N}(t)} [ 2U_{T(l)}^s(t) - \C E_{T(l)}(t) ]. 
\end{align*}
Using this inequality in \eqref{eq:ineqb}, we get (all summations are over $\{ s:(T(l),s)\in \dbtilde{N}(t)\}$), 
\begin{align}
\begin{split}
G(\bm{p}(t)) - G(\bm{p}_0(t)) &\geq \C p_l(t) \sum_{s}\;\; [ U_{T(l)}^s(t) - \C E_{T(l)}(t)  - \delta p_l(t) \sum_{s} \;\;[ 2U_{T(l)}^s(t) - \C E_{T(l)}(t) ]\end{split}\\
\begin{split}
& = p_l(t) \sum_{s}\;\; \left[ U_{T(l)}^s(t) - \C E_{T(l)}(t) \right]\bigg[\C - \delta \left(1+\frac{ \sum_{s}\;\; U_{T(l)}^s(t)   }{ \sum_{s}\;\; [ U_{T(l)}^s(t) - \C E_{T(l)}(t) ] }\right) \bigg ]\end{split} \\
\begin{split}
& \geq p_l(t) \sum_{s}\;\; \left[ U_{T(l)}^s(t) - \C E_{T(l)}(t) \right]\bigg [\C - \delta \left(1+ \frac{ U_0 }{  U_0 - \C \Pm }\right) \bigg ]\label{eq:batProofE2}\end{split}\\
\begin{split}
& = p_l(t) \sum_{s}\;\; \left[ U_{T(l)}^s(t) - \C E_{T(l)}(t) \right] \left(\C\left (1-\frac{1}{\alpha}\right)-2\delta\right)\\
&= 0.\label{eq:batProofE3}
\end{split}
\end{align}

Inequality \eqref{eq:batProofE2} is due to the fact that $U_{T(l)}^s > U_0$ and $E_{T(l)} < \Pm$. The  equality  in \eqref{eq:batProofE3} can be verified by plugging $U_0 = CP_m+\alpha\delta\Pm$ in \eqref{eq:batProofE2}. Finally, noting that $\C > \frac{2\delta}{\left (1-\frac{1}{\alpha}\right)}$ the intended result is proved.
\bibliographystyle{IEEEtran}
\bibliography{paper}

\end{document}